\theoremstyle{theorem}
\newtheorem{theorem}{Theorem}
\newtheorem{proposition}{Proposition}
\newtheorem*{problem}{Problem}
\newtheorem{openproblem}{Open problem}
\theoremstyle{definition}
\begin{document}

\title[From electrostatic potentials to yet another triangle center]{From electrostatic potentials to yet another triangle center}

\author{Hrvoje Abraham}
\address{Hrvoje Abraham, Artes Calculi, Deren\v{c}inova 1, 10000 Zagreb, Croatia}
\email{hrvoje.abraham@artcalc.com}
\email{ahrvoje@gmail.com}

\author{Vjekoslav Kova\v{c}}
\address{Vjekoslav Kova\v{c}, University of Zagreb, Faculty of Science, Department of Mathematics, Bijeni\v{c}ka cesta 30, 10000 Zagreb, Croatia}
\email{vjekovac@math.hr}

\begin{abstract}
We study the problem of finding a point of maximal electrostatic potential inside an arbitrary triangle with homogeneous surface charge distribution.
In this article we derive several synthetic and analytic relations for its location in the plane.
Moreover, this point satisfies the definition of a triangle center, different from any of previously discovered centers in Clark Kimberling's encyclopedia.
\end{abstract}

\subjclass[2010]{Primary 51M04, 51N20; Secondary 51P05, 78A30}

\date{February 26, 2014}

\maketitle

\section{Introduction}

The topic we are about to discuss was initiated by a concrete and practical question in physics
that has eventually revealed its unexpectedly interesting geometrical flavor.
Let us begin with a statement of this theoretical problem and postpone applied motivation to the end of this section.

\begin{problem}
Suppose that a planar triangle $T$ is a continuous source of charge, which is homogeneously distributed over its surface,
i.e.\@ the charge density is constant over the triangle.
At which point in the same plane the electrostatic potential of $T$ attains its maximum value?
\end{problem}

All physical notions will be accompanied with their precise definitions and the discussion will soon turn into elementary geometrical considerations.
Let us recall that the potential of a point source with charge $q$ evaluated at a point that is $r$ units apart is given by $V(r)=k q/r$.
This is merely a restatement of \emph{Coulomb's law} and the constant $k$ is not important for us.
By ``superposition principle'' for multiple charges it is therefore reasonable to define the potential generated by the whole triangle $T$ as
\begin{equation}
\label{formulapotential}
V(P) = \iint_{T} \frac{d\lambda(Q)}{|PQ|}
\end{equation}
for any point $P$ in the plane.
Here $\lambda$ denotes the two-dimensional Lebesgue measure (i.e.\@ the area measure),
$Q$ is an integration variable, and $|PQ|$ denotes the distance between points $P$ and $Q$.
We are careless about the multiplicative constant or the charge density and we even omit them from writing.
In Cartesian coordinates the above formula becomes simply
$$ V(x,y) = \iint_{T} \frac{dx'dy'}{\sqrt{(x'-x)^2+(y'-y)^2}} . $$

It is easy to see that $V$ is indeed a well-defined function on the whole plane.
One can draw contour graphs of \eqref{formulapotential} for various choices of triangles
using the Mathematica command ContourPlot \cite{m9} and the level sets will look as those in figure \ref{figurepotential}.
\begin{figure}[ht]
\begin{center}
\includegraphics[width=2.5in]{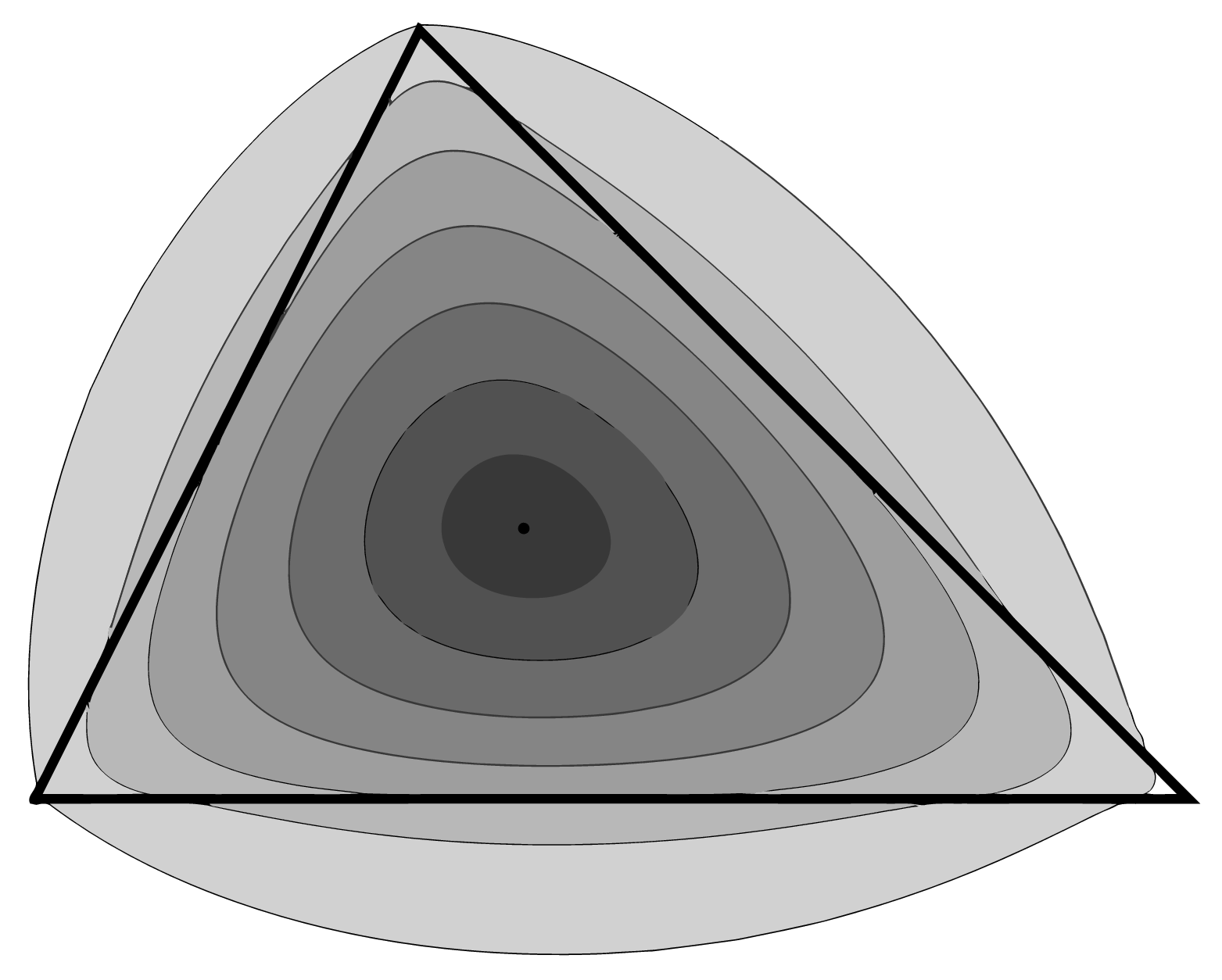}
\end{center}
\caption{Contour graph of $V$.}
\label{figurepotential}
\end{figure}
Such drawings can make us suspect that $V$ has the shape of a single ``mountain peak,'' but this certainly could not pass as a rigorous argument.
It is not immediately clear from the formula that there even exist a point $P_\textup{max}$ inside $T$ where $V$ attains its maximum
and it is certainly not obvious that such point should be unique for every triangle.
Moreover, we would like to locate this point, in a certain sense, for an arbitrary given triangle $T$.

What would be the physical meaning of the maximum potential point?
It is the point where the electrostatic field $\vec{E}$ generated by $T$ stabilizes.
Let us perform a simple thought experiment. Assume that $T$ is charged positively and place a negative point charge somewhere in the plane.
It will necessarily be driven by electrostatic forces unless it is placed at a point where it ``feels perfectly stable.''
Figure \ref{figurefield} illustrates several integral curves of the vector field $\vec{E}$,
which are also known in physics as \emph{lines of force} or \emph{field lines}.
\begin{figure}[ht]
\begin{center}
\includegraphics[width=2.5in]{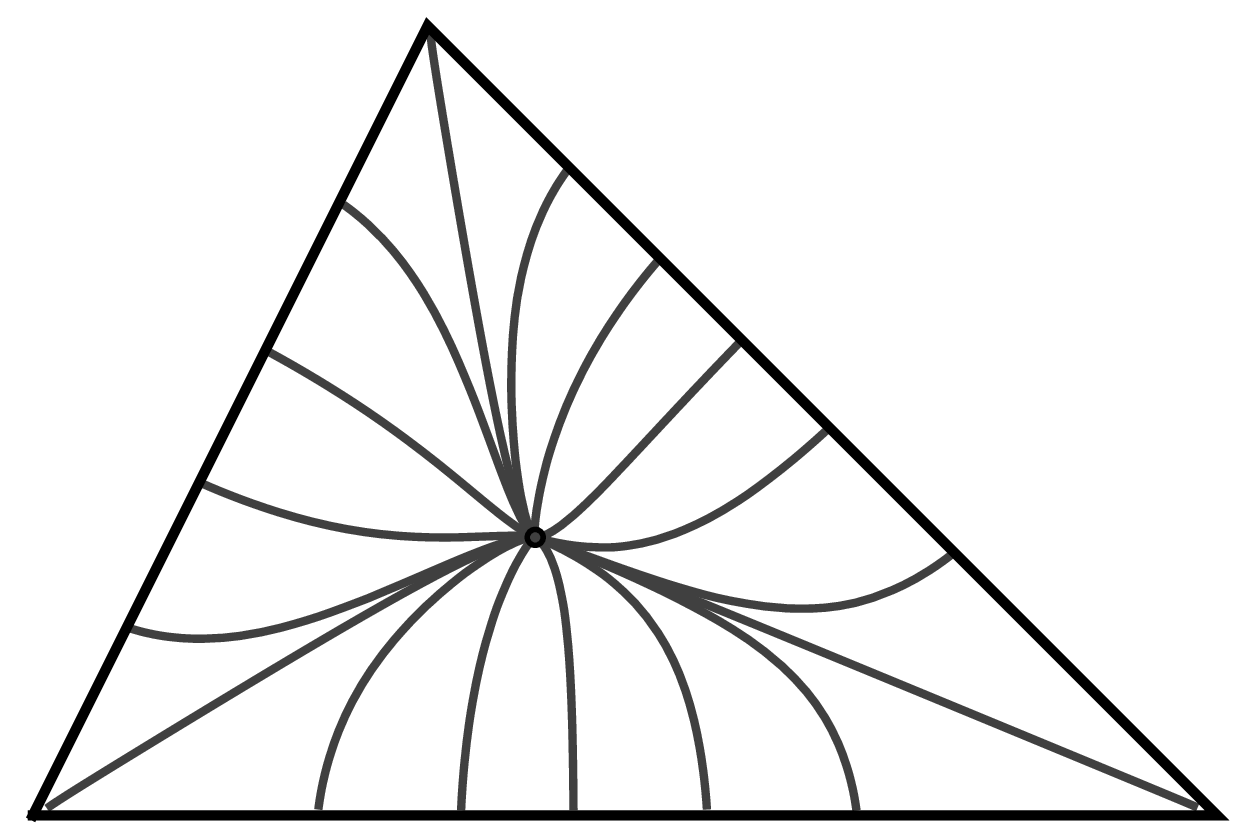}
\end{center}
\caption{Integral curves of $\vec{E}$ that lie inside $T$.}
\label{figurefield}
\end{figure}
Observe that they all meet at the same point inside $T$. This experiment is once again very far from a rigorous proof.
Existence and uniqueness of the maximum potential point will follow from more general results in convex analysis and will be discussed in the next section.
However, in the case of a triangle, they will also come as a byproduct of our attempts to specify its location throughout the rest of the paper.

We have just mentioned the notion of electrostatic field, so what would that field be in the case of our charged triangle $T$?
It can be defined simply as $\vec{E}=-\nabla V$ at any point where the potential $V$ is differentiable.
In physics, the electric field is sometimes (but not always) given before the potential.
We have intentionally ordered things this way, simply because the potential of $T$ was easier to define mathematically.
Going back to a point source, an easily derived and well-known formula is $\vec{E}=k q \vec{r}/r^3$.
Here $\vec{r}$ denotes a directed line segment from the source to a point where the field is computed.
Using the superposition principle once again we suspect that the correct corresponding expression is
\begin{equation}
\label{formulafield1}
\vec{E}(P) = \iint_{T} \frac{\overrightarrow{QP}}{|PQ|^3}d\lambda(Q)
= - \iint_{T} \frac{\overrightarrow{PQ}}{|PQ|^3}d\lambda(Q) ,
\end{equation}
or coordinate-wise with $\vec{i}$ and $\vec{j}$ being the standard unit vectors,
$$ \vec{E}(x,y) = - \iint_{T} \frac{(x'-x)\vec{i}+(y'-y)\vec{j}}{((x'-x)^2+(y'-y)^2)^{3/2}}dx'dy' . $$
However, the double integral in the above formula will not be absolutely convergent unless $P=(x,y)$ lies outside $T$.
To explain the difficulty, assume that $P$ is contained in the triangle interior,
together with a ``small'' disk $\textup{D}_{\varepsilon}(P)$ of radius $\varepsilon$ around it.
We insert absolute values inside the double integral and only integrate over this disk.
Changing to a polar coordinate system centered at $P$ we obtain
$$ \iint_{\textup{D}_{\varepsilon}(P)} \frac{|\overrightarrow{PQ}|}{|PQ|^3}d\lambda(Q)
= \int_{0}^{\varepsilon}\!\int_{0}^{2\pi}\! \frac{r}{r^3} r dr d\varphi = +\infty , $$
because $\int_{0}^{\varepsilon}dr/r$ diverges.

In order to get a valid formula for $\vec{E}$ that would hold for points $P$ in the interior of $T$,
one simply has to observe that the contributions $\frac{\overrightarrow{PQ}}{|PQ|^3}$ of points $Q\in\textup{D}_{\varepsilon}(P)$
cancel out each other completely, see figure \ref{figurecancel}.
\begin{figure}[ht]
\begin{center}
\includegraphics[width=2.59in]{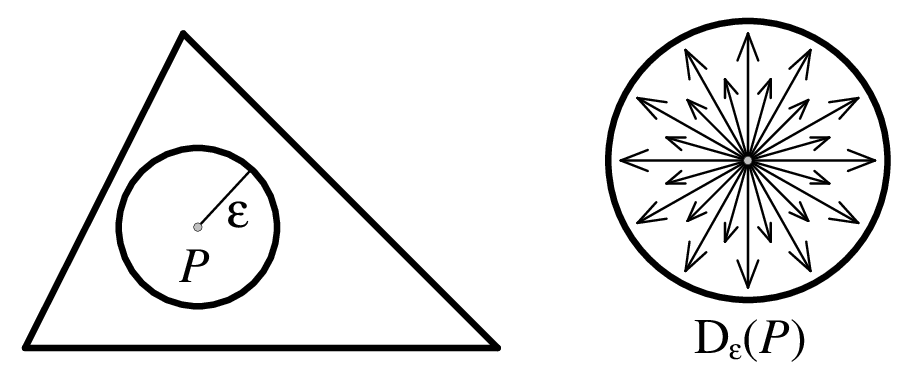}
\end{center}
\caption{Cancellations in the singular part of the integral.}
\label{figurecancel}
\end{figure}
Therefore,
\begin{equation}
\label{formulafield3}
\vec{E}(P) = - \iint_{T\setminus\textup{D}_{\varepsilon}(P)} \frac{\overrightarrow{PQ}}{|PQ|^3}d\lambda(Q)
\end{equation}
should hold for $P$ inside the triangle. Indeed, one can even let $\varepsilon\to 0$,
obtaining the expression called the \emph{principal value} of the integral:
\begin{equation}
\label{formulafield2}
\vec{E}(P) = - \textup{p.v.}\!\iint_{T} \frac{\overrightarrow{PQ}}{|PQ|^3}d\lambda(Q) .
\end{equation}
Things remain problematic for points $P$ at the boundary, because the same argument shows that the expression for $\vec{E}(P)$
does not converge in any usual sense. Indeed, the potential is continuous but not differentiable at those points.

The main source of motivation for the problem comes from implementation of a certain type of \emph{boundary element method (BEM)} for electrostatic problems
\cite{b77}, \cite{j63}, \cite{lsa06}, \cite{r67}.
Boundary element methods are usually formulated by surface elements of a three-dimensional object
and these elements are in turn most often represented by triangles.
In the case of an electrostatic problem, a single triangle potential could be evaluated either at vertices,
or at a certain interior point, depending on the formulation of the method.
In the later case, it is common to take the center of mass (i.e.\@ the centroid), but there is no reason or evidence why this would be the best choice.
Indeed, one can argue that using the maximum potential point provides better results, but such discourse is out of the scope of this paper.
Calculating its coordinates and discovering its properties proved to be challenges on their own.

\section{Existence and uniqueness}
\label{sectionproperties}

In this section we start the rigorous mathematical treatment of the problem.
Our potential is a particular instance of the so-called \emph{fractional integral},
\begin{equation}\label{formulariesz}
(I_p f)(x,y) = \iint \Big((x'-x)^2+(y'-y)^2\Big)^{p/2} f(x',y')\,dx'dy' ,
\end{equation}
which is also known as the \emph{Riesz potential} \cite{r49} when $-2<p<0$ and when it is properly normalized.
In order to obtain \eqref{formulapotential}, one only has to take $p=-1$ and choose $f$ to be the indicator function of $T$.

Extreme points of ``regularized'' versions of $I_p f$ when $p$ is a real number and $f$ is the characteristic function of a general convex set
(even in higher dimensions) have already been studied in the literature.
They were named \emph{radial centers} by M. Moszy\'{n}ska \cite{m00}, who seems to be the first to establish their existence and uniqueness
for $-2<p\leq 1$, while the remaining cases were studied by I. Herburt \cite{h07} and J. O'Hara \cite{oh12},
who called these points \emph{$r^p$ centers}.
Herburt, Moszy\'{n}ska, and Peradzy\'{n}ski \cite{hmp05} gave physical interpretations of radial centers,
mentioning gravitational and electrostatic potentials for $p=-1$, but do not specialize the discussion to triangles.
On the other hand, we need to mention an unpublished text by K. Shibata \cite{s09} on a similarly defined but different point in a triangle,
corresponding to $p=-2$, which we discuss briefly in the last section.

As we have already said, existence and uniqueness of the maximum point for $V$ follows
from general results of Moszy\'{n}ska \cite[Section 3]{m00} for general compacts convex sets $T$ with nonempty interior.
Moreover, Herburt \cite{h08} showed that the maximum point lies in the interior of $T$ if the set $T$ has piecewise smooth boundary.
However, since we are only interested in a very special case when $T$ is a triangle in $\mathbb{R}^2$,
we are able to reprove these facts easily between the lines of the more precise results on the maximum point location.
This keeps the material elementary and self-contained.

The following proposition is an easy exercise in vector calculus, so we only provide proofs of its nontrivial parts.

\begin{proposition}\label{prop1}
\item[(a)]
Potential $V$ is finite and continuous on the whole plane.
\item[(b)]
$V(P)\to 0$ uniformly as the distance from $P$ to $T$ tends to $\infty$.
\item[(c)]
Potential $V$ is differentiable both in the interior and in the exterior of $T$.
\item[(d)]
Field $\vec{E}=-\nabla V$ is given by \eqref{formulafield1} for exterior points $P$ and by \eqref{formulafield3} or
\eqref{formulafield2} for points $P$ in the interior of $T$.
\item[(e)]
Potential $V$ cannot attain local maxima in the exterior or on the boundary of $T$.
\end{proposition}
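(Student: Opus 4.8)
The plan is to treat the five claims in an order that lets later parts reuse earlier ones, leaving the two genuinely delicate statements — interior differentiability and the boundary case of (e) — for last. Throughout I think of $V=\mathbf 1_T * K$ with the kernel $K(x)=|x|^{-1}$, whose basic feature is that it is locally integrable in the plane because in polar coordinates $r^{-1}\,r\,dr\,d\varphi=dr\,d\varphi$.

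For (a), finiteness follows from $V(P)\le\iint_{D_R(P)}|PQ|^{-1}\,d\lambda(Q)=2\pi R$ for any disk $D_R(P)\supseteq T$. For continuity at $P_0$, I would bound $|V(P)-V(P_0)|$ by $\iint_T\big||PQ|^{-1}-|P_0Q|^{-1}\big|\,d\lambda(Q)$ and split $T$ into $T\cap D_\rho(P_0)$ and its complement: on the small disk each of the two terms integrates to $O(\rho)$ by the polar bound above (using $D_\rho(P_0)\subset D_{2\rho}(P)$ when $|PP_0|<\rho$), while on $T\setminus D_\rho(P_0)$ the integrand is bounded and continuous in $P$, so the difference tends to $0$ by dominated convergence; letting $\rho\to0$ gives continuity. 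Claim (b) is immediate from $V(P)\le\lambda(T)/\textup{dist}(P,T)$, a bound depending only on the distance.

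For $P\in\textup{Ext}(T)$ we have $d:=\textup{dist}(P,T)>0$, so on a small ball about $P$ the kernel $|PQ|^{-1}$ and all its $P$-derivatives are bounded uniformly in $Q\in T$; differentiating under the integral sign to all orders is therefore legitimate. This makes $V$ smooth in the exterior and gives $\nabla V(P)=\iint_T \overrightarrow{PQ}\,|PQ|^{-3}\,d\lambda(Q)$, which is \eqref{formulafield1}, settling the exterior halves of (c) and (d). A one-line computation in the plane yields $\Delta_P|PQ|^{-1}=|PQ|^{-3}$, so $\Delta V(P)=\iint_T|PQ|^{-3}\,d\lambda(Q)>0$; thus $V$ is strictly subharmonic on $\textup{Ext}(T)$ and cannot attain a local maximum there (at one the Hessian would be negative semidefinite, forcing $\Delta V\le0$), which is the exterior part of (e). The interior of (c)–(d) is the heart of the matter. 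Fix $P_0\in\textup{Int}(T)$ and choose $\varepsilon$ so small that $\overline{D_\varepsilon(P_0)}\subset\textup{Int}(T)$; write $V=A+B$, the potentials of the uniform charge on $D_\varepsilon(P_0)$ and on $T\setminus D_\varepsilon(P_0)$. For $B$ the singularity has been excised, so differentiation under the integral is again justified near $P_0$ and $\nabla B(P_0)=\iint_{T\setminus D_\varepsilon(P_0)}\overrightarrow{P_0Q}\,|P_0Q|^{-3}\,d\lambda(Q)$. For $A$, rotational symmetry makes it a radial function of $|P-P_0|$, and the key point is that its radial profile is differentiable at $0$ with vanishing derivative, so $\nabla A(P_0)=\vec 0$; I would establish this by reducing the disk potential to a one-dimensional expression in a complete elliptic integral and observing that it is an even, real-analytic function of the distance near $0$ (consistent with the cancellation in figure \ref{figurecancel}, i.e.\ the odd integrand has vanishing principal value over the symmetric disk). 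Adding the pieces gives $\nabla V(P_0)=\iint_{T\setminus D_\varepsilon(P_0)}\overrightarrow{P_0Q}\,|P_0Q|^{-3}\,d\lambda(Q)$; this is independent of small $\varepsilon$ (the annular integrals vanish by symmetry) and equals the principal value in \eqref{formulafield3}–\eqref{formulafield2}, proving the interior halves of (c) and (d) simultaneously.

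It remains to show that no boundary point is a local maximum. For an edge point $P_0$ with inward unit normal $\vec n$, I would show the inward derivative is $+\infty$: computing the contribution to $\nabla V\cdot\vec n=-\vec E\cdot\vec n$ from the charge in a fixed small half-disk at $P_0$, with the field point pushed a distance $t$ into $T$, one integrates first in the tangential variable and then takes a principal value in the normal variable, whose symmetric part cancels and whose tail contributes a divergence like $2\log(1/t)$ as $t\to0^+$, while the contribution of the rest of $T$ stays bounded. Hence $\partial_{\vec n}V(P_0+s\vec n)\to+\infty$ as $s\to0^+$, and integrating the (positive, integrable) normal derivative gives $V(P_0+t\vec n)>V(P_0)$ for small $t>0$, so $P_0$ is not a local maximum; vertices are handled by the same local computation with a circular sector replacing the half-disk. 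The main obstacle is precisely the interior differentiability used above: naive differentiation under the integral fails because $\big|\nabla_P|PQ|^{-1}\big|=|PQ|^{-2}$ is \emph{not} locally integrable in the plane, so everything hinges on isolating the symmetric singularity on $D_\varepsilon(P_0)$ and verifying that the disk potential is genuinely differentiable — not merely Lipschitz — at its center, since the crude estimates only yield the weaker Lipschitz bound.
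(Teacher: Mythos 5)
Your proposal is correct, and in two places it takes a genuinely different route from the paper. For the interior case of (c)--(d) --- which you rightly identify as the crux --- the paper makes no decomposition of $V$: it estimates the difference quotient directly against the candidate gradient, writing $\frac{1}{|P_0P|}\big(V(P)-V(P_0)+\vec{E}(P_0)\cdot\overrightarrow{P_0P}\big)$ as a sum of three explicit integrals $J_1,J_2,J_3$, each bounded by elementary inequalities ($|P_0Q|\geq\varepsilon$, $|PQ|\geq\varepsilon$, $\big||P_0Q|-|PQ|\big|\leq|P_0P|$) after the contributions of $\textup{D}_\varepsilon(P_0)\cap\textup{D}_\varepsilon(P)$ are cancelled by reflection symmetry; this is completely self-contained and, as the paper remarks, carries over to general compact convex sets. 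Your split $V=A+B$ instead reduces the singular part to the classical in-plane potential of a uniform disk, $g(t)=4\varepsilon E(t/\varepsilon)$ with $E$ the complete elliptic integral of the second kind; since $E$ is analytic in $k^2$, the profile is an analytic function of $|P-P_0|^2$ and $\nabla A(P_0)=\vec{0}$, which settles exactly the differentiability-versus-Lipschitz issue you flag. This buys conceptual clarity (and even real-analyticity of the disk piece) at the price of importing a special-function evaluation the paper avoids; note that you only need $g(t)-g(0)=o(t)$, so the expansion $E(k)=\frac{\pi}{2}\big(1-\frac{k^2}{4}-\cdots\big)$ through second order already suffices. For the exterior half of (e) the paper proves something stronger: choosing a line through $P_0$ with $T$ on one side and the outward normal $\vec{n}$, formula \eqref{formulafield1} gives $\vec{E}(P_0)\cdot\vec{n}>0$, so there are no stationary points at all in $\textup{Ext}(T)$; your observation that $\Delta_P|PQ|^{-1}=|PQ|^{-3}$ in the plane (the Coulomb kernel is not harmonic in two dimensions) makes $V$ strictly subharmonic outside $T$ --- after justifying a second differentiation under the integral, which is routine there --- and excludes local maxima only, which is weaker but all that (e) and corollary \ref{cor1} require. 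At the boundary both arguments rest on the same mechanism (the inward derivative blows up), but the paper merely asserts $\lim_{h\to 0}\big(V(P_0-h\vec{n})-V(P_0)\big)/h=+\infty$ and leaves the verification to the reader, whereas you establish $\partial_{\vec n}V(P_0+s\vec{n})\sim 2\log(1/s)$ at nearby interior points (the constant is correct) and integrate via the fundamental theorem of calculus, treating vertices explicitly with a sector; your version thus supplies detail the paper omits. Finally, your one-line bound $V(P)\leq\lambda(T)/\textup{dist}(P,T)$ for (b) is simpler than the paper's viewing-angle estimate $4R\arcsin\big(R/(\textup{dist}(P,T)-R)\big)$ with the same decay rate, and your continuity argument in (a), splitting off $\textup{D}_\rho(P_0)$, is a routine variant of the paper's translation-plus-dominated-convergence trick.
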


\begin{proof}[Proof of proposition \ref{prop1}]
\emph{Parts (a) and (b)} are very easy and follow simply from absolute integrability of the function in \eqref{formulapotential}
and boundedness of the domain $T$.

\emph{Parts (c) and (d).}
Fix a point $P_0$ inside $T$ and choose $\varepsilon>0$ twice smaller than the distance from $P_0$ to the boundary of $T$.
We need to show that $V$ is differentiable at $P_0$ and that $\nabla V(P_0)=-\vec{E}(P_0)$, where $\vec{E}(P_0)$ is given by formula \eqref{formulafield3}.
Take any point $P$ such that $|PP_0|<\varepsilon$.
Parts of the integrals in the expression $V(P)-V(P_0)$
corresponding to $\textup{D}_{\varepsilon}(P_0)\!\cap\!\textup{D}_{\varepsilon}(P)$ cancel out by symmetry,
so this difference is equal to
$$ \iint_{T\setminus(\textup{D}_{\varepsilon}(P_0)\cup\textup{D}_{\varepsilon}(P))}\Big(\frac{1}{|PQ|} - \frac{1}{|P_0 Q|}\Big) d\lambda(Q) . $$
Using
$$ |P_0 Q|^2 - |P Q|^2 = 2\overrightarrow{P_0 Q}\cdot\overrightarrow{P_0 P}-|P_0 P|^2 , $$
it can be rewritten as
$$ V(P) - V(P_0) = \iint_{T\setminus(\textup{D}_{\varepsilon}(P_0)\cup\textup{D}_{\varepsilon}(P))}
\frac{2\overrightarrow{P_0 Q}\cdot\overrightarrow{P_0 P}-|P_0 P|^2}{|P_0 Q| |PQ| (|P_0 Q|+|PQ|)} d\lambda(Q) . $$
On the other hand, from \eqref{formulafield3},
$$ \vec{E}(P_0)\cdot \overrightarrow{P_0 P} = -\iint_{T\setminus\textup{D}_{\varepsilon}(P_0)}
\frac{\overrightarrow{P_0 Q}\cdot\overrightarrow{P_0 P}}{|P_0 Q|^3} d\lambda(Q) . $$
After simple algebraic manipulations
and by splitting
$$ \textup{D}_{\varepsilon}(P_0) = \big(\textup{D}_{\varepsilon}(P_0)\cup\textup{D}_{\varepsilon}(P)\big)
\setminus \big(\textup{D}_{\varepsilon}(P)\setminus\textup{D}_{\varepsilon}(P_0)\big) , $$
we arrive at
$$ \frac{1}{|P_0 P|}\Big(V(P)-V(P_0)+\vec{E}(P_0)\cdot \overrightarrow{P_0 P}\Big) = J_1 - J_2 - J_3 , $$
where
\begin{align*}
J_1 & = \iint_{T\setminus(\textup{D}_{\varepsilon}(P_0)\cup\textup{D}_{\varepsilon}(P))}
\frac{\overrightarrow{P_0 Q}\cdot\overrightarrow{P_0 P}}{|P_0 Q| |P_0 P|}
\,\frac{2|P_0 Q|+|PQ|}{|P_0 Q|+|PQ|} \,\frac{|P_0 Q|-|PQ|}{|P_0 Q|^2 |PQ|} \,d\lambda(Q) , \\
J_2 & = \iint_{T\setminus(\textup{D}_{\varepsilon}(P_0)\cup\textup{D}_{\varepsilon}(P))}
\frac{|P_0 P|}{|P_0 Q| |PQ| (|P_0 Q|+|PQ|)} \,d\lambda(Q) , \\
J_3 & = \iint_{\textup{D}_{\varepsilon}(P)\setminus\textup{D}_{\varepsilon}(P_0)}
\frac{\overrightarrow{P_0 Q}\cdot\overrightarrow{P_0 P}}{|P_0 Q| |P_0 P|} \,\frac{1}{|P_0 Q|^2} \,d\lambda(Q) .
\end{align*}
Using $|P_0 Q|\geq\varepsilon$, $|PQ|\geq\varepsilon$, and $\big||P_0 Q|-|PQ|\big|\leq|P_0 P|$ the first integral is easily bounded as
$$ |J_1| \leq \frac{2}{\varepsilon^3}\lambda(T)|P_0 P| $$
and similarly we get
$$ |J_2| \leq \frac{1}{2\varepsilon^3}\lambda(T)|P_0 P| ,
\quad |J_3| \leq \frac{1}{\varepsilon^2}\lambda\big(\textup{D}_{\varepsilon}(P)\!\setminus\!\textup{D}_{\varepsilon}(P_0)\big) . $$
Letting $P\to P_0$ we conclude
$$ \lim_{P\to P_0} \frac{V(P)-V(P_0)+\vec{E}(P_0)\cdot \overrightarrow{P_0 P}}{|P_0 P|} =0 , $$
which is precisely what we needed.

For points $P_0$ in the exterior of $T$ the proof can follow the same lines.
Moreover, an even shorter proof can be given for such $P_0$ by entirely standard arguments of interchanging limits and integrals,
as the integral in \eqref{formulafield1} is an absolutely convergent one.

\emph{Part (e).}
Begin by taking a point $P_0$ outside $T$. Informally saying, the field does not vanish at $P_0$ since it has to ``point'' away from $T$.
More rigorously, let $l$ be any line passing though $P_0$ and containing $T$ entirely in one of the two corresponding half-planes,
see figure \ref{figureseparating}.
\begin{figure}[ht]
\begin{center}
\includegraphics[width=2.1in]{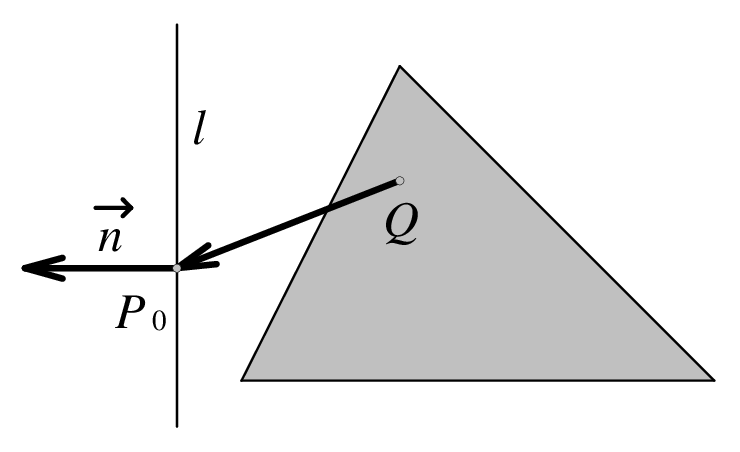}
\end{center}
\caption{Treatment of exterior points.}
\label{figureseparating}
\end{figure}
If $\vec{n}$ is a vector normal to $l$ and oriented in the opposite direction, then formula \eqref{formulafield1} yields
$$ \vec{E}(P_0)\cdot\vec{n} = \iint_{T}\frac{\overrightarrow{QP_0}\cdot\vec{n}}{|P_0 Q|^3}d\lambda(Q) > 0 . $$
Consequently, $(\nabla V)(P_0)\neq\vec{0}$, so $P_0$ cannot be a stationary point for $V$.

The same argument ``almost works'' for points at the triangle boundary.
Even though $\vec{E}(P_0)$ does not exist, we can imagine that it is a vector of infinite length pointing outwards.
The reader can modify the proof of parts (c) and (d) to show that
$$ \lim_{h\to 0}\frac{V(P_0-h\vec{n})-V(P_0)}{h} = +\infty $$
holds for the same choice of $\vec{n}$. Once again, we conclude that $P_0$ is not a local maximum point for $V$.
\end{proof}

It is now easy to conclude that potential $V$ attains its maximum at some point inside triangle $T$
and at each such point $P$ one has $\vec{E}(P)=\vec{0}$.
Indeed, by positivity and parts (a) and (b) of Proposition \ref{prop1} it follows that $V$ is bounded and has a maximum
that is attained at some (finite) point in the plane.
By part (e) we know that any such point must lie in the interior of $T$.
Finally, the second assertion is a consequence of parts (c) and (d).

We need to remark that an explicit formula for $V$ can be computed, although it is rather complicated and not practically useful.
Instead, it will be more useful to transform formula \eqref{formulafield3} for $\vec{E}(P)$ in the next section.

\section{Geometric relations}
\label{sectionrelations}

Throughout this section suppose that $P$ is a stationary point inside a positively oriented triangle $T=\triangle ABC$,
i.e.\@ the corresponding vector field $\vec{E}$ vanishes at $P$.
We already know that $P$ has to coincide with the unique maximum point of $V$, but prefer to use condition $\vec{E}(P)=\vec{0}$ only,
in order to reprove the uniqueness result.
Denote its distances from vertices $A,B,C$ respectively by
$$ r_A=|PA|, \ \ r_B=|PB|, \ \ r_C=|PC| . $$
Let us also introduce convenient notation for the several angles it determines,
$$ \begin{array}{lll}
\alpha_1=\angle BAP, & \beta_1=\angle CBP, & \gamma_1=\angle ACP, \\
\alpha_2=\angle PAC, & \beta_2=\angle PBA, & \gamma_2=\angle PCB,
\end{array} $$
as in figure \ref{figurenotation}.
Finally, we use standard notation for triangle sidelengths and angles:
$$ a=|BC|, \ \ b=|CA|, \ \ c=|AB|, \ \ \alpha=\angle BAC, \ \ \beta=\angle CBA, \ \ \gamma=\angle ACB . $$
\begin{figure}[ht]
\begin{center}
\includegraphics[width=2.19in]{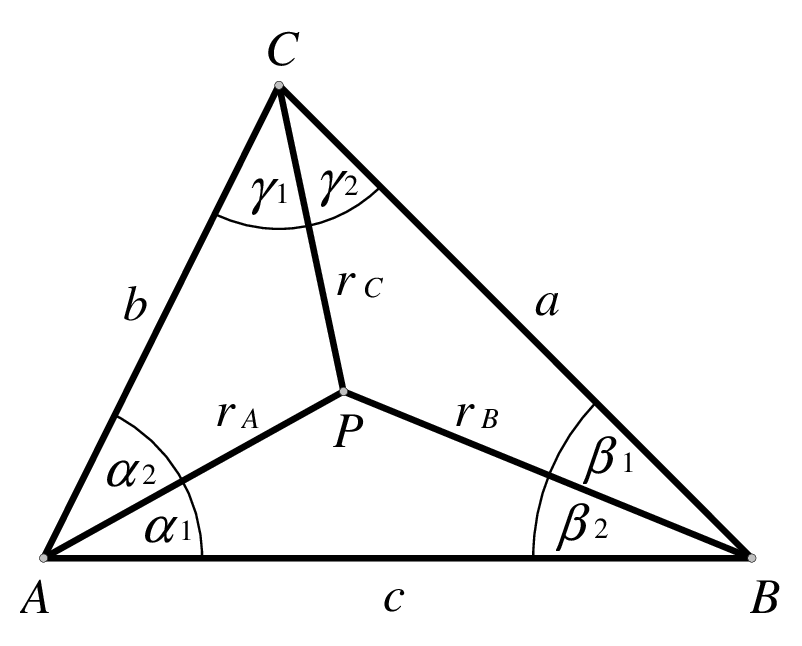}
\end{center}
\caption{Convenient notation.}
\label{figurenotation}
\end{figure}

The following theorem gives two simple relations that enable us to locate such point $P$ in the plane.
The first relation is in terms of distances from triangle vertices, while the second one is in terms of the angles defined above.

\begin{theorem}\label{thm1}
If $P$ is a point inside triangle $ABC$ such that $\vec{E}(P)=\vec{0}$, then
\begin{equation}\label{formularelation1}
\left(\frac{r_B+r_C-a}{r_B+r_C+a}\right)^{1/a} = \left(\frac{r_C+r_A-b}{r_C+r_A+b}\right)^{1/b} = \left(\frac{r_A+r_B-c}{r_A+r_B+c}\right)^{1/c}
\end{equation}
and
\begin{equation}\label{formularelation2}
\bigg(\!\tan\frac{\beta_1}{2}\tan\frac{\gamma_2}{2}\bigg)^{\frac{1}{\sin\alpha}}
= \bigg(\!\tan\frac{\gamma_1}{2}\tan\frac{\alpha_2}{2}\bigg)^{\frac{1}{\sin\beta}}
= \bigg(\!\tan\frac{\alpha_1}{2}\tan\frac{\beta_2}{2}\bigg)^{\frac{1}{\sin\gamma}} .
\end{equation}
\end{theorem}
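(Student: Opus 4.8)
The plan is to convert the two-dimensional principal-value integral for $\vec E(P)$ into a one-dimensional integral over the boundary $\partial T$, where everything becomes elementary. The starting observation is that $\frac{\overrightarrow{PQ}}{|PQ|^3}=-\nabla_Q\frac{1}{|PQ|}$, so that \eqref{formulafield2} reads $\vec E(P)=\textup{p.v.}\iint_T\nabla_Q\frac{1}{|PQ|}\,d\lambda(Q)$. First I would apply the divergence theorem componentwise on the region $T\setminus\textup{D}_\varepsilon(P)$ and let $\varepsilon\to 0$, exactly as in the excision argument of proposition \ref{prop1}(c)--(d). On the inner circle $\partial\textup{D}_\varepsilon(P)$ the outward normal of the region points toward $P$, and its contribution equals $-\int_0^{2\pi}(\cos\varphi,\sin\varphi)\,d\varphi=\vec 0$ for every $\varepsilon$, so it drops out and leaves the clean formula
$$ \vec E(P) = \oint_{\partial T}\frac{\vec n}{|PQ|}\,ds = \ell_a\,\vec n_a + \ell_b\,\vec n_b + \ell_c\,\vec n_c , $$
where $\vec n_a,\vec n_b,\vec n_c$ are the outward unit normals to the sides $BC,CA,AB$, these being constant along each straight side, and $\ell_a=\int_{BC}ds/|PQ|$, cyclically.

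Next I would evaluate the scalar line integral $\ell_a$. Writing $Q$ at signed distance $t$ from the foot of the perpendicular dropped from $P$ onto line $BC$, and letting $h_a=\textup{dist}(P,BC)$, one has $|PQ|=\sqrt{h_a^2+t^2}$, hence $\ell_a=\ln\frac{t_C+r_C}{t_B+r_B}$, where $t_C-t_B=a$ and $r_B^2-t_B^2=r_C^2-t_C^2=h_a^2$. The identity $(r_C-t_C)(r_C+t_C)=(r_B-t_B)(r_B+t_B)$ gives $\frac{r_C+t_C}{r_B+t_B}=\frac{r_B-t_B}{r_C-t_C}$, and the mediant of these two equal ratios is
$$ \frac{(r_C+t_C)+(r_B-t_B)}{(r_B+t_B)+(r_C-t_C)} = \frac{r_B+r_C+a}{r_B+r_C-a} , $$
so that $\ell_a=\ln\frac{r_B+r_C+a}{r_B+r_C-a}$, and cyclically for $\ell_b,\ell_c$. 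These are precisely the bases appearing in \eqref{formularelation1}.

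The final algebraic step exploits that the outward normals satisfy $a\,\vec n_a+b\,\vec n_b+c\,\vec n_c=\vec 0$, which is just $\oint_{\partial T}\vec n\,ds=\vec 0$ for the closed boundary. Since no two sides of a triangle are parallel, any two of $\vec n_a,\vec n_b,\vec n_c$ are linearly independent, so the space of linear relations among the three is one-dimensional. The hypothesis $\vec E(P)=\vec 0$ supplies a second relation $\ell_a\vec n_a+\ell_b\vec n_b+\ell_c\vec n_c=\vec 0$, forcing $(\ell_a,\ell_b,\ell_c)$ to be proportional to the nonzero vector $(a,b,c)$; thus $\ell_a/a=\ell_b/b=\ell_c/c$, which after exponentiating and taking reciprocals is exactly \eqref{formularelation1}. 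For \eqref{formularelation2} I would invoke the standard half-angle identity in the sub-triangle $PBC$, whose base angles at $B,C$ are $\beta_1,\gamma_2$ and whose semiperimeter is $(a+r_B+r_C)/2$, giving $\tan\frac{\beta_1}{2}\tan\frac{\gamma_2}{2}=\frac{r_B+r_C-a}{r_B+r_C+a}$, together with the law of sines $a=2R\sin\alpha$. Since the common factor $2R$ cancels across the three sides, \eqref{formularelation2} is equivalent to \eqref{formularelation1}.

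The hard part will be the rigorous passage through the singularity in the first step: justifying that the principal value of the area integral genuinely equals the boundary integral requires the careful $\varepsilon$-excision and the vanishing of the inner-circle term, in the same spirit as proposition \ref{prop1}. A secondary point demanding care is the sign bookkeeping in the computation of $\ell_a$ when the foot of the perpendicular falls outside the segment $BC$, that is when a base angle of the sub-triangle is obtuse; keeping $t_B,t_C$ as \emph{signed} coordinates makes the mediant identity go through unchanged, as it uses only $t_C-t_B=a$ and $r^2-t^2=h_a^2$.
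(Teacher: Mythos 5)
Your proposal is correct, and it reaches theorem \ref{thm1} by a genuinely different computational route. The paper centers polar coordinates at $P$, reduces $\vec{E}(P)=\vec{0}$ to the condition \eqref{formulaequalszero}, and evaluates the resulting $\int(\log\sin\psi)\,e^{i\psi}\,d\psi$ integrals side by side, so that the tangent half-angle products of \eqref{formularelation2} fall directly out of the antiderivatives, with \eqref{formularelation1} then obtained from the law of cosines. You instead write $\vec{E}(P)=\textup{p.v.}\!\iint_T\nabla_Q |PQ|^{-1}\,d\lambda(Q)$ and apply the gradient form of the divergence theorem on $T\setminus\textup{D}_{\varepsilon}(P)$; this is legitimate because \eqref{formulafield3} (proposition \ref{prop1}(d)) already identifies $\vec{E}(P)$ with the excised integral for every small $\varepsilon$, the integrand is smooth on the excised region, and your observation that the inner circle contributes exactly $\vec{0}$ for each $\varepsilon$ is right, since $ds=\varepsilon\,d\varphi$ cancels the $1/\varepsilon$. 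The side integrals $\ell_a=\ln\frac{t_C+r_C}{t_B+r_B}$ are elementary, and the mediant identity is sound because $r_B+t_B>0$ and $r_C-t_C>0$ for interior $P$ (where $h_a>0$), including the obtuse configurations you flag. Your closing step --- that $(a,b,c)$ and $(\ell_a,\ell_b,\ell_c)$ both lie in the one-dimensional space of linear relations among $\vec{n}_a,\vec{n}_b,\vec{n}_c$ --- is the same linear-algebra device the paper uses via $\overrightarrow{AC}=-\overrightarrow{BA}-\overrightarrow{CB}$; in fact your normal equation is the paper's tangential identity rotated by $90^{\circ}$, since the half-angle product in triangle $PBC$ gives $\ell_a=-\log\bigl(\tan\frac{\beta_1}{2}\tan\frac{\gamma_2}{2}\bigr)$. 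You then derive \eqref{formularelation1} first and deduce \eqref{formularelation2} from the semiperimeter half-angle identity together with $a=2R\sin\alpha$, reversing the paper's order, which is fine since the common factor $2R$ cancels across the three exponents. As for what each approach buys: yours replaces the $\log\sin$ integrations by the antiderivative $\ln\bigl(t+\sqrt{h^2+t^2}\bigr)$, gives the bases in \eqref{formularelation1} a transparent meaning as $e^{-\ell_a}$ with $\ell_a=\int_{BC}ds/|PQ|$, yields the pleasant closed form $\vec{E}(P)=\ell_a\vec{n}_a+\ell_b\vec{n}_b+\ell_c\vec{n}_c$ valid at every interior point, and extends verbatim to convex polygons; the paper's polar computation, on the other hand, is the one that adapts to general exponents, producing the condition \eqref{formulaequalszero2} on which the analysis of the potentials $V_p$ in the last section relies.
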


In particular, equations \eqref{formularelation1} and \eqref{formularelation2} hold for the maximum point of potential $V$.

\begin{proof}[Proof of theorem \ref{thm1}]
Take $P$ to be the origin of the coordinate system and change to polar coordinates.
Let us denote by $M_\varphi$ the point at the intersection of the polar ray determined by an angle $\varphi\in[0,2\pi)$
with the boundary of $\triangle ABC$. Furthermore, let us write $R(\varphi)=|P M_\varphi|$.
For $\varepsilon>0$ small enough formula \eqref{formulafield3} becomes
\begin{align*}
\vec{E}(P) & = - \int_{\varepsilon}^{R(\varphi)} \!\!\int_{0}^{2\pi} \frac{r(\cos\varphi)\vec{i}+r(\sin\varphi)\vec{j}}{r^3} \,r dr d\varphi \\
& = - \int_{0}^{2\pi} \!\big(\log R(\varphi)-\log\varepsilon\big) \Big((\cos\varphi)\vec{i}+(\sin\varphi)\vec{j}\Big) d\varphi
\end{align*}
and then using $\int_{0}^{2\pi}\cos\varphi \,d\varphi=0=\int_{0}^{2\pi}\sin\varphi \,d\varphi$ we get
$$ \vec{E}(P) = - \int_{0}^{2\pi} \!\log R(\varphi) \,\Big((\cos\varphi)\vec{i}+(\sin\varphi)\vec{j}\Big) d\varphi . $$

For the rest of the proof it will be convenient to represent vectors by complex numbers, i.e.\@ to work in the complex plane.
Using $e^{i\varphi}=\cos\varphi+i\sin\varphi$ the condition $\vec{E}(P)=\vec{0}$ becomes simply
\begin{equation}\label{formulaequalszero}
\int_{0}^{2\pi} \!\log R(\varphi) \,e^{i\varphi} d\varphi = 0 .
\end{equation}

The next step is to find an expression for $\log R(\varphi)$.
Let vertices $A,B,C$ have complex coordinates
$$ r_A e^{i\varphi_A}, r_B e^{i\varphi_B}, r_C e^{i\varphi_C} $$
and let vectors $\overrightarrow{CB},\overrightarrow{AC},\overrightarrow{BA}$ be represented by complex numbers
$$ a e^{i\theta_{a}}, b e^{i\theta_{b}}, c e^{i\theta_{c}} . $$
Without loss of generality suppose that $M_\varphi$ lies on side $AB$ of $\triangle ABC$, which is the same as saying
$\varphi_A<\varphi<\varphi_B$, where we possibly need to adjust the angles by adding appropriate multiples of $2\pi$.
Let $d_c$ denote the distance from $P$ to the line $AB$ and let $\psi$ denote the angle $\angle BM_\varphi P$.
From figure \ref{figureangledistance} we see that $\psi=\varphi-\varphi_A+\alpha_1$ and $R(\varphi)=d_c/\sin\psi$, i.e.
$$ \log R(\varphi)=\log d_c-\log\sin\psi . $$
\begin{figure}[ht]
\begin{center}
\includegraphics[width=3.03in]{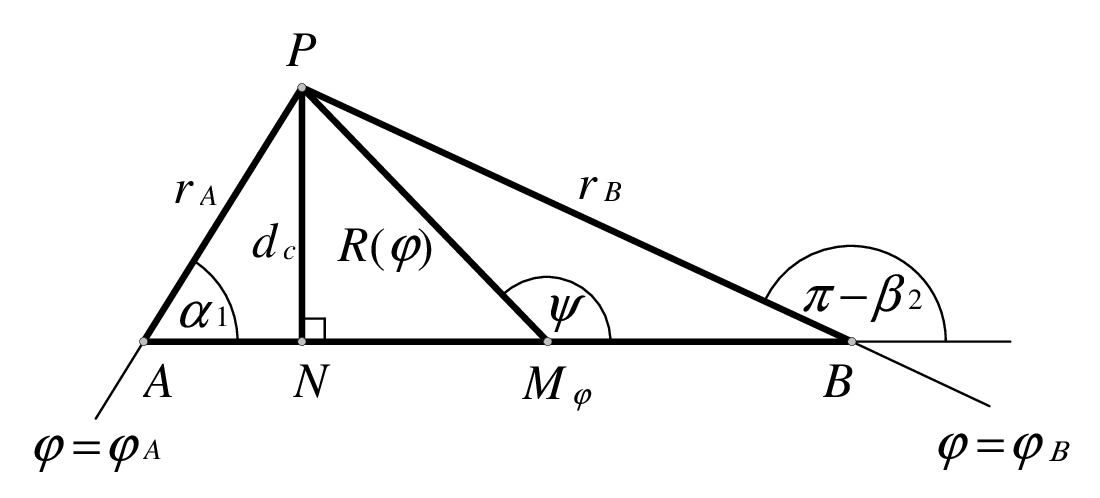}
\end{center}
\caption{Discussion of the range $\varphi_A<\varphi<\varphi_B$.}
\label{figureangledistance}
\end{figure}
Observing that $\psi$ ranges from $\alpha_1$ to $\pi-\beta_2$ we get
$$ \int_{\varphi_A}^{\varphi_B} \log R(\varphi) \, e^{i\varphi} d\varphi
= \log d_c \int_{\varphi_A}^{\varphi_B} \!e^{i\varphi} d\varphi
- \int_{\alpha_1}^{\pi-\beta_2} \!(\log\sin\psi) \,e^{i(\psi+\varphi_A-\alpha_1)} d\psi . $$

First, we use an immediate formula
\begin{equation}\label{formulaintegral1}
\int_{\eta}^{\vartheta} e^{i\varphi}d\varphi = \big(-i e^{i\varphi}\big)\Big|_{\varphi=\eta}^{\varphi=\vartheta} .
\end{equation}
Next, it is an easy exercise in integration by parts to obtain
$$ \int_{\eta}^{\vartheta} (\log\sin\psi) \cos\psi \,d\psi
= \Big(\big(\log\sin\psi-1\big)\sin\psi\Big)\Big|_{\psi=\eta}^{\psi=\vartheta} $$
and
$$ \int_{\eta}^{\vartheta} (\log\sin\psi) \sin\psi \,d\psi
= \Big(-\!\big(\log\sin\psi-1\big)\cos\psi+\log\tan{\textstyle\frac{\psi}{2}}\Big)\Big|_{\psi=\eta}^{\psi=\vartheta} $$
for angles $0<\eta<\vartheta<\pi$.
Combining we get
\begin{equation}\label{formulaintegral2}
 \int_{\eta}^{\vartheta} (\log\sin\psi) e^{i\psi} d\psi
= \Big(-i\big(\log\sin\psi-1\big)e^{i\psi}+i\log\tan{\textstyle\frac{\psi}{2}}\Big)\Big|_{\psi=\eta}^{\psi=\vartheta} .
\end{equation}
From formulas \eqref{formulaintegral1}, \eqref{formulaintegral2} we obtain
\begin{align*}
\int_{\varphi_A}^{\varphi_B} \log R(\varphi) \, e^{i\varphi} d\varphi
& = - i\log d_c \,e^{i\varphi_B} + i\log d_c \,e^{i\varphi_A} \\[-1.5mm]
& \ + i e^{i(\varphi_A-\alpha_1+\pi-\beta_2)}(\log\sin\beta_2\!-\!1) - i e^{i\varphi_A}(\log\sin\alpha_1\!-\!1) \\
& \ - i e^{i(\varphi_A-\alpha_1)}\log\tan{\textstyle\frac{\pi-\beta_2}{2}}
+ i e^{i(\varphi_A-\alpha_1)}\log\tan{\textstyle\frac{\alpha_1}{2}} ,
\end{align*}
and then using
$$ d_c/\sin\alpha_1=r_A, \ \ d_c/\sin\beta_2=r_B, \ \ \varphi_A\!-\!\alpha_1\!+\!\pi\!-\!\beta_2=\varphi_B, \ \ \varphi_A\!-\!\alpha_1=\theta_c $$
we get
\begin{align*}
\int_{\varphi_A}^{\varphi_B} \log R(\varphi) \, e^{i\varphi} d\varphi
& = - i e^{i\varphi_B}(\log r_B-1) + i e^{i\varphi_A}(\log r_A-1) \\[-1.5mm]
& \ + i e^{i\theta_c}\big(\log\tan{\textstyle\frac{\alpha_1}{2}} - \log\cot{\textstyle\frac{\beta_2}{2}}\big) .
\end{align*}
Adding this one and the two analogous relations, applying \eqref{formulaequalszero}, and observing cancellations
of $i e^{i\varphi_A}(\log r_A-1)$ and the two alike terms gives
$$ e^{i\theta_c}\log(\tan{\textstyle\frac{\alpha_1}{2}}\tan{\textstyle\frac{\beta_2}{2}})
+ e^{i\theta_a}\log(\tan{\textstyle\frac{\beta_1}{2}}\tan{\textstyle\frac{\gamma_2}{2}})
+ e^{i\theta_b}\log(\tan{\textstyle\frac{\gamma_1}{2}}\tan{\textstyle\frac{\alpha_2}{2}}) = 0 . $$
We can interpret this using vectors once again as
$$ \frac{\log(\tan\frac{\alpha_1}{2}\tan\frac{\beta_2}{2})}{c}\overrightarrow{BA}
+ \frac{\log(\tan\frac{\beta_1}{2}\tan\frac{\gamma_2}{2})}{a}\overrightarrow{CB}
+ \frac{\log(\tan\frac{\gamma_1}{2}\tan\frac{\alpha_2}{2})}{b}\overrightarrow{AC} = \vec{0} . $$
Next, we claim that
\begin{equation}\label{formulaauxrel}
 {\textstyle\frac{1}{c}}\log(\tan{\textstyle\frac{\alpha_1}{2}}\tan{\textstyle\frac{\beta_2}{2}})
= {\textstyle\frac{1}{a}}\log(\tan{\textstyle\frac{\beta_1}{2}}\tan{\textstyle\frac{\gamma_2}{2}})
= {\textstyle\frac{1}{b}}\log(\tan{\textstyle\frac{\gamma_1}{2}}\tan{\textstyle\frac{\alpha_2}{2}}) .
\end{equation}
To see \eqref{formulaauxrel} one only has to observe $\overrightarrow{AC}=-\overrightarrow{BA}-\overrightarrow{CB}$
and make use of linear independence of $\overrightarrow{BA}$ and $\overrightarrow{CB}$.
If we apply the law of sines and exponentiate \eqref{formulaauxrel}, we will complete the proof of \eqref{formularelation2}.

In order to derive \eqref{formularelation1}, we use trigonometric half-angle formulas, the law of cosines, and some factoring:
$$ \tan^2\frac{\alpha_1}{2} = \frac{1-\cos\alpha_1}{1+\cos\alpha_1}
= \frac{1-(r_A^2\!+\!c^2\!-\!r_B^2)/2r_A c}{1+(r_A^2\!+\!c^2\!-\!r_B^2)/2r_A c}
= \frac{(r_A\!+\!r_B\!-\!c)(r_B\!-\!r_A\!+\!c)}{(r_A\!+\!r_B\!+\!c)(r_A\!-\!r_B\!+\!c)} . $$
Multiplying this one with an analogous expression for $\tan\frac{\beta_2}{2}$ and taking square roots gives
$$ \tan\frac{\alpha_1}{2}\tan\frac{\beta_2}{2} = \frac{r_A+r_B-c}{r_A+r_B+c} , $$
so that \eqref{formulaauxrel} becomes
\begin{equation}\label{formularelwithlogs}
\frac{1}{c}\log\!\Big(\frac{r_A+r_B-c}{r_A+r_B+c}\Big)
= \frac{1}{a}\log\!\Big(\frac{r_B+r_C-a}{r_B+r_C+a}\Big)
= \frac{1}{b}\log\!\Big(\frac{r_C+r_A-b}{r_C+r_A+b}\Big) .
\end{equation}
Exponentiation proves relation \eqref{formularelation1}.
\end{proof}

\section{Cartesian coordinates}

Here we address the problem of determining the coordinates of $P$, given the coordinates of triangle vertices.
The starting point are equalities \eqref{formularelation1}, i.e.\@ their logarithmic version \eqref{formularelwithlogs}.
It is easy to see that these expressions are less than $0$, so it is natural to consider their negatives.
Multiply them further by the semiperimeter $s=\frac{1}{2}(a+b+c)$ of triangle $ABC$ in order to make them ``dimensionless''
and denote the obtained common value by $\lambda$:
$$ -\frac{s}{a}\log\!\Big(\frac{r_B+r_C-a}{r_B+r_C+a}\Big)
= -\frac{s}{b}\log\!\Big(\frac{r_C+r_A-b}{r_C+r_A+b}\Big)
= -\frac{s}{c}\log\!\Big(\frac{r_A+r_B-c}{r_A+r_B+c}\Big) = \lambda . $$
Concentrating on only one expression at a time, we can now write
$$ \frac{r_B+r_C-a}{r_B+r_C+a} = e^{-a\lambda/s} , $$
so that
$$ r_B+r_C = a\,\frac{1+e^{-a\lambda/s}}{1-e^{-a\lambda/s}}
= a\,\frac{e^{a\lambda/2s}+e^{-a\lambda/2s}}{e^{a\lambda/2s}-e^{-a\lambda/2s}} = a\coth{\textstyle\frac{a\lambda}{2s}} $$
and similarly
$$ r_C+r_A = b\coth{\textstyle\frac{b\lambda}{2s}}, \quad r_A+r_B = c\coth{\textstyle\frac{c\lambda}{2s}} . $$
Let us agree to write
\begin{equation}\label{formulauvw}
u=a\coth{\textstyle\frac{a\lambda}{2s}}, \ \ v=b\coth{\textstyle\frac{b\lambda}{2s}}, \ \ w=c\coth{\textstyle\frac{c\lambda}{2s}}
\end{equation}
in all that follows. Hence,
\begin{equation}\label{formulararbrc}
r_A=\frac{1}{2}(v+w-u), \ \ r_B=\frac{1}{2}(w+u-v), \ \ r_C=\frac{1}{2}(u+v-w) .
\end{equation}

Now is the time to observe that the distances $r_A,r_B,r_C$ are not independent.
The simplest equation relating them can be derived from
$$ \textup{area}(\triangle PBC) + \textup{area}(\triangle PCA) + \textup{area}(\triangle PAB) = \textup{area}(\triangle ABC) $$
using Heron's formula:
\begin{align*}
& \sqrt{s_a(s_a-a)(s_a-r_B)(s_a-r_C)} + \sqrt{s_b(s_b-b)(s_b-r_C)(s_b-r_A)} \\
& + \sqrt{s_c(s_c-c)(s_c-r_A)(s_c-r_B)} = \sqrt{s(s-a)(s-b)(s-c)} ,
\end{align*}
with $s_a,s_b,s_c,s$ being semiperimeters of the the four triangles respectively.
Substituting \eqref{formulararbrc}, multiplying by $4$, and simplifying we obtain
\begin{align}
& \sqrt{(u^2-a^2)\big(a^2-(v\!-\!w)^2\big)} + \sqrt{(v^2-b^2)\big(b^2-(w\!-\!u)^2\big)} \nonumber \\
& + \sqrt{(w^2-c^2)\big(c^2-(u\!-\!v)^2\big)} = \sqrt{2(a^2 b^2\!+\!b^2 c^2\!+\!c^2 a^2)-(a^4\!+\!b^4\!+\!c^4)} . \label{formulaeqforlambda}
\end{align}
This is a nonlinear equation for $\lambda$ and then $r_A,r_B,r_C$
are determined by \eqref{formulauvw} and \eqref{formulararbrc}.

It remains to explain how to express coordinates of $P(x_P,y_P)$ from its distances to triangle vertices $A(x_A,y_A)$, $B(x_B,y_B)$, $C(x_C,y_C)$.
Using the formula for Euclidean distance in Cartesian coordinates we get an overdetermined quadratic system for $x_P$ and $y_P$,
\begin{align*}
(x_P-x_A)^2 + (y_P-y_A)^2 & = r_A^2, \\[-1mm]
(x_P-x_B)^2 + (y_P-y_B)^2 & = r_B^2, \\[-1mm]
(x_P-x_C)^2 + (y_P-y_C)^2 & = r_C^2.
\end{align*}
Subtracting the third equation from the first two leads to a linear system
\begin{align*}
2(x_C-x_A)x_P + 2(y_C-y_A)y_P & = x_C^2 - x_A^2 + y_C^2 - y_A^2 + v(w-u), \\
2(x_C-x_B)x_P + 2(y_C-y_B)y_P & = x_C^2 - x_B^2 + y_C^2 - y_B^2 + u(w-v),
\end{align*}
which can be quickly solved as
\begin{align}
x_P & = {\textstyle\frac{(x_A^2+y_A^2-vw)(y_B-y_C)+(x_B^2+y_B^2-wu)(y_C-y_A)+(x_C^2+y_C^2-uv)(y_A-y_B)}
{2x_A(y_B-y_C)+2x_B(y_C-y_A)+2x_C(y_A-y_B)}} , \label{formulacartforx} \\
y_P & = {\textstyle\frac{(x_A^2+y_A^2-vw)(x_B-x_C)+(x_B^2+y_B^2-wu)(x_C-x_A)+(x_C^2+y_C^2-uv)(x_A-x_B)}
{2y_A(x_B-x_C)+2y_B(x_C-x_A)+2y_C(x_A-x_B)}} . \label{formulacartfory}
\end{align}

That way we have established the following theorem.

\begin{theorem}\label{thm2}
Suppose that $P$ is a point inside $\triangle ABC$ satisfying $\vec{E}(P)=\vec{0}$.
Its Cartesian coordinates satisfy \eqref{formulacartforx} and \eqref{formulacartfory},
where $u,v,w$ are defined by \eqref{formulauvw} and $\lambda>0$ is a solution of equation \eqref{formulaeqforlambda}.
\end{theorem}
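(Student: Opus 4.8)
The plan is to assemble the theorem by organizing the chain of identities prepared just above, taking Theorem \ref{thm1} as the starting point. Since $P$ lies in the interior of $\triangle ABC$, each of the sub-triangles $\triangle PBC$, $\triangle PCA$, $\triangle PAB$ is nondegenerate, so the strict triangle inequality gives $r_B+r_C>a$ together with its two analogues. Hence each quotient such as $(r_B+r_C-a)/(r_B+r_C+a)$ lies in $(0,1)$ and each logarithm in \eqref{formularelwithlogs} is strictly negative. I would therefore let $\lambda$ denote the common value of $-\tfrac{s}{a}\log(\cdots)$, $-\tfrac{s}{b}\log(\cdots)$, and $-\tfrac{s}{c}\log(\cdots)$, which is a single positive number by the equalities in \eqref{formularelwithlogs}.

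First I would invert each equality. Solving $(r_B+r_C-a)/(r_B+r_C+a)=e^{-a\lambda/s}$ for the sum gives $r_B+r_C=a\coth\tfrac{a\lambda}{2s}$, together with the two analogous identities; these are exactly the quantities $u,v,w$ of \eqref{formulauvw}. Reading $u=r_B+r_C$, $v=r_C+r_A$, $w=r_A+r_B$ as a linear system in $r_A,r_B,r_C$ with invertible coefficient matrix, I can solve it to recover \eqref{formulararbrc}.

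Next comes the single scalar constraint that pins $\lambda$ down. The three sub-triangles partition $T$, so their areas sum to $\textup{area}(T)$; applying Heron's formula to each of $\triangle PBC$, $\triangle PCA$, $\triangle PAB$ (with side lengths $a,r_B,r_C$, and so on) and to $T$ itself, then substituting \eqref{formulararbrc} and multiplying through by $4$, I expect the radicands to collapse into the symmetric products displayed in \eqref{formulaeqforlambda}. This leaves one equation in the single unknown $\lambda$.

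Finally I would recover the Cartesian coordinates. The three squared-distance equations $|PA|^2=r_A^2$, $|PB|^2=r_B^2$, $|PC|^2=r_C^2$ form an overdetermined but consistent quadratic system, since such a $P$ is assumed to exist; subtracting the third from the first two cancels the common $x_P^2+y_P^2$ terms and yields a $2\times2$ linear system. Using \eqref{formulararbrc} to rewrite $r_A^2-r_C^2=v(w-u)$ and $r_B^2-r_C^2=u(w-v)$, and then solving by Cramer's rule, produces \eqref{formulacartforx}--\eqref{formulacartfory}. The main obstacle I anticipate is the Heron-formula step: checking that after substitution the four radicals reduce exactly to the clean form of \eqref{formulaeqforlambda} takes careful factoring, and one should also verify that every radicand stays nonnegative so that the equation is meaningful for an admissible $\lambda$.
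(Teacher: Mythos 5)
Your proposal is correct and follows essentially the same route as the paper: defining $\lambda>0$ as the common negated, $s$-scaled value of the logarithms in \eqref{formularelwithlogs}, inverting to obtain $u,v,w$ as in \eqref{formulauvw} and recovering $r_A,r_B,r_C$ via \eqref{formulararbrc}, pinning down $\lambda$ through the Heron-formula area decomposition that yields \eqref{formulaeqforlambda}, and finally subtracting the squared-distance equations to get the linear system solved by \eqref{formulacartforx}--\eqref{formulacartfory}. Your two added checks---the strict triangle inequality on the nondegenerate sub-triangles (justifying $\lambda>0$) and the nonnegativity of the Heron radicands---are correct small refinements of points the paper passes over quickly.
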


Turning back to equation \eqref{formulaeqforlambda}, we might want to know the number of its positive solutions.
We claim that the left-hand side is a strictly decreasing function of $\lambda\in(0,\infty)$.
Since
$$ \lambda \,\mapsto\, u^2-a^2 = a^2 \big(\coth^2{\textstyle\frac{a\lambda}{2s}}-1\big) $$
is obviously strictly decreasing, it remains to show that
$$ \lambda \,\mapsto\, |v-w| = \big|b\coth{\textstyle\frac{b\lambda}{2s}}-c\coth{\textstyle\frac{c\lambda}{2s}}\big| $$
increases and that its values stay below $a$. Without loss of generality suppose $b\geq c$.
It is an easy calculus exercise to see that $t\mapsto t\coth t$ is increasing, so the expression inside the last modulus is always positive.
Define
$$ g(t) = b\coth bt - c\coth ct , $$
so that
$$ g'(t) = -\frac{b^2}{\sinh^2 bt} + \frac{c^2}{\sinh^2 ct} . $$
Inequality $g'(t)\geq 0$ is equivalent with
$$ \frac{\sinh bt}{b} \geq \frac{\sinh ct}{c} , $$
which can also be verified easily, using the fact that $t\mapsto(\sinh t)/t$ increases.
Finally, we observe that
$$ \lim_{t\to\infty}g(t) = b-c < a , $$
by the triangle inequality.

Therefore, \eqref{formulaeqforlambda} can have at most one positive solution $\lambda$, which combines nicely with theorem \ref{thm2}
to prove the fact that there can be only one point $P$ inside $T$ such that $\vec{E}(P)=\vec{0}$.
This leads us to the promised result on uniqueness of the maximum point for $V$.

From now on we denote this unique maximum potential point by $P_\textup{max}(x_\textup{max},y_\textup{max})$.
One could name it the \emph{electrostatic center} of $T$, although the term \emph{gravitational center} has already been used
in the literature \cite{h07}, \cite{hmp05} in the study of general convex bodies in $\mathbb{R}^n$.
When we actually want to solve equation \eqref{formulaeqforlambda} for $\lambda$,
we do not know how to do it analytically, so we need to use numerical techniques.
For instance, by taking $A(-1,0)$, $B(2,0)$, and $C(0,2)$ we get
$$ \lambda_{\textup{max}} = 4.010297202743007522718690055346\ldots , $$
and then from \eqref{formulacartforx} and \eqref{formulacartfory},
$$ \begin{array}{rl}x_\textup{max} = & \! 0.272557906914867702024319226991\ldots, \\
y_\textup{max} = & \! 0.704148189723077020171531030875\ldots. \end{array} $$

Even though equation \eqref{formulaeqforlambda} does not seem to be solvable in terms of elementary functions,
we do not really have a rigorous proof of this fact.

\begin{openproblem}
Is it possible to express Cartesian coordinates of $P_\textup{max}$ (or equivalently its parameter $\lambda_\textup{max}$)
as elementary functions of triangle sides $a,b,c$?
\end{openproblem}

If one desires to write the coordinates of $P_\textup{max}$ as explicitly as possible, it will perhaps be easier to do so using a series expansion.
We still require that each term of the series is given by an elementary formula.

\begin{openproblem}
Is it possible to express Cartesian coordinates of $P_\textup{max}$
as two convergent series, $x_\textup{max}=\sum_{n=1}^{\infty}x_n$ and $y_\textup{max}=\sum_{n=1}^{\infty}y_n$,
where both $x_n$ and $y_n$ are elementary functions of $a$, $b$, $c$, and $n$?
\end{openproblem}

Our desire to obtain a series expansion is motivated by a common practice in theoretical physics.
We have to remark once again that numerical schemes for solving \eqref{formulaeqforlambda} actually do lead to
approximations of $x_\textup{max}$ and $y_\textup{max}$ by sequences or series.
However, in that case $(x_n)_{n=1}^{\infty}$ and $(y_n)_{n=1}^{\infty}$ are defined recursively, still without giving us
a single explicit formula that would hold for each $n$.

Equation \eqref{formulaeqforlambda} seems to be a transcendental one, but at least the four square roots can be eliminated by
squaring the equality three times. We do not write down the result of this procedure as it involves more complicated expressions.

\section{Trilinear coordinates}

The point $P_\textup{max}$ deserves to be called a triangle center, as purely physical reasons suggest that it always occupies
the same relative position in any member of a family of mutually similar triangles.
However, the notion of triangle center was rigorously defined in \cite{ck}.
Let us begin by introducing a convenient choice of relative homogeneous coordinates with respect to a given triangle $ABC$.
\emph{Trilinear coordinates} of a point $P$ inside $\triangle ABC$ are any real numbers $\tau_a : \tau_b : \tau_c$
such that
$$ \frac{\tau_a}{d_a} = \frac{\tau_b}{d_b} = \frac{\tau_c}{d_c} , $$
where $d_a,d_b,d_c$ are (directed) distances from $P$ to triangle sides $BC$, $CA$, $AB$ respectively.
Equivalently, $a\tau_a : b\tau_b : c\tau_c$ are the barycentric coordinates of $P$.

A real valued function $f$ defined on the set of all possible triples of triangle side lengths $(a,b,c)$
is called a \emph{triangle center function} if it has the following properties.
\begin{itemize}
\item There exists a real constant $\nu$ such that $f(ta,tb,tc)=t^\nu f(a,b,c)$ for $t>0$, i.e.\@ $f$ is homogeneous of order $\nu$.
\item Equality $f(a,c,b)=f(a,b,c)$ holds for any triple in the domain of $f$.
\item $f$ is not identically $0$.
\end{itemize}
A \emph{triangle center} associated to $f$ is then the point given by trilinear coordinates
\begin{equation}\label{formulacenter}
f(a,b,c) : f(b,c,a) : f(c,a,b) .
\end{equation}
We need to remark that the same center can be associated to many different center functions $f$.

What can we say about our point $P_\textup{max}$?
Calculations from the previous section immediately give
$$ \frac{\tau_a}{\tau_b} = \frac{\textup{area}(\triangle PBC)/a}{\textup{area}(\triangle PCA)/b}
= \frac{\sqrt{((u/a)^2-1)(a^2-(v-w)^2)}}{\sqrt{((v/b)^2-1)(b^2-(w-u)^2)}} , $$
so we see that a good choice of triangle center function for $P_\textup{max}$ is
$$ f(a,b,c) = \sqrt{\Big(\coth^2{\textstyle\frac{a\lambda_\textup{max}}{a+b+c}}-1\Big)
\Big(a^2 - \big(b\coth{\textstyle\frac{b\lambda_\textup{max}}{a+b+c}}-c\coth{\textstyle\frac{c\lambda_\textup{max}}{a+b+c}}\big)^2\Big)} , $$
where $\lambda_\textup{max}$ is the unique positive solution to \eqref{formulaeqforlambda}.
Also, $f$ obviously fulfills all three requirements above (with $\nu=1$).
One only has to observe that $\lambda_\textup{max}$ remains the same if the triangle is scaled by a factor $t$.
This proves the announced assertion that $P_\textup{max}$ is a non-trivial triangle center.

All interesting triangle centers are being collected systematically in C. Kimberling's encyclopedia \cite{ck},
which contains $5622$ entries $X(1)$--$X(5622)$ at the moment of writing of this paper.
Trilinear coordinates are given for these characteristic points, justifying their worth to be mentioned.
In order to detect new centers, the encyclopedia also offers the search among the existing ones using the numerical value of
$$ d_a = \textup{dist}(P,BC) = \frac{2\tau_a \textup{area}(ABC)}{a\tau_a + b\tau_b + c\tau_c} $$
in the particular case of triangle with sides $a=6$, $b=9$, $c=13$.
For point $P_\textup{max}$ it is now easy to compute this value to $30$ decimal digits:
$$ d_a = 2.110731796690289177459836888182\ldots $$
and realize that it does not appear in the list.

Trilinear coordinates for $P_\textup{max}$ are implicit due to the fact that $\lambda_\textup{max}$ is not explicitly given.
Just in the case that the first open problem we stated turns out to have a positive answer, it will be interesting
to see if the trilinear coordinates can be algebraic functions of triangle sides.
Once again we are quite sceptical about that possibility.

\begin{openproblem}
Prove that $P_\textup{max}$ is a transcendental triangle center, i.e.\@ it does not have a trilinear representation \eqref{formulacenter},
with $f$ being an algebraic function of $a,b,c$.
\end{openproblem}

\section{Approximation for the parameter}

It remains to say a few words on estimation of $\lambda_\textup{max}$.
Equation \eqref{formulaeqforlambda} degenerates for an equilateral triangle simply to
$$ 3a^2 \sqrt{\coth^2{\textstyle\frac{\lambda}{3}}-1} = a^2\sqrt{3} , $$
which is easily solved as $\lambda_0=3\log(2+\sqrt{3})$.
An interesting fact we obtained ``experimentally'' is that the exact value of $\lambda_\textup{max}$
for a general triangle $ABC$ is ``quite correlated'' with the quantity
$$ \log\frac{s^2}{27\rho^2} = \log\frac{s^3}{27(s-a)(s-b)(s-c)} \geq 0 , $$
where $\rho$ is radius of the inscribed circle.
Figure \ref{figureratio} sketches graph of the ratio of $\lambda_\textup{max}-\lambda_0$ and this quantity
as a function of two angles $\alpha$ and $\beta$.
\begin{figure}[ht]
\begin{center}
\includegraphics[width=2.5in]{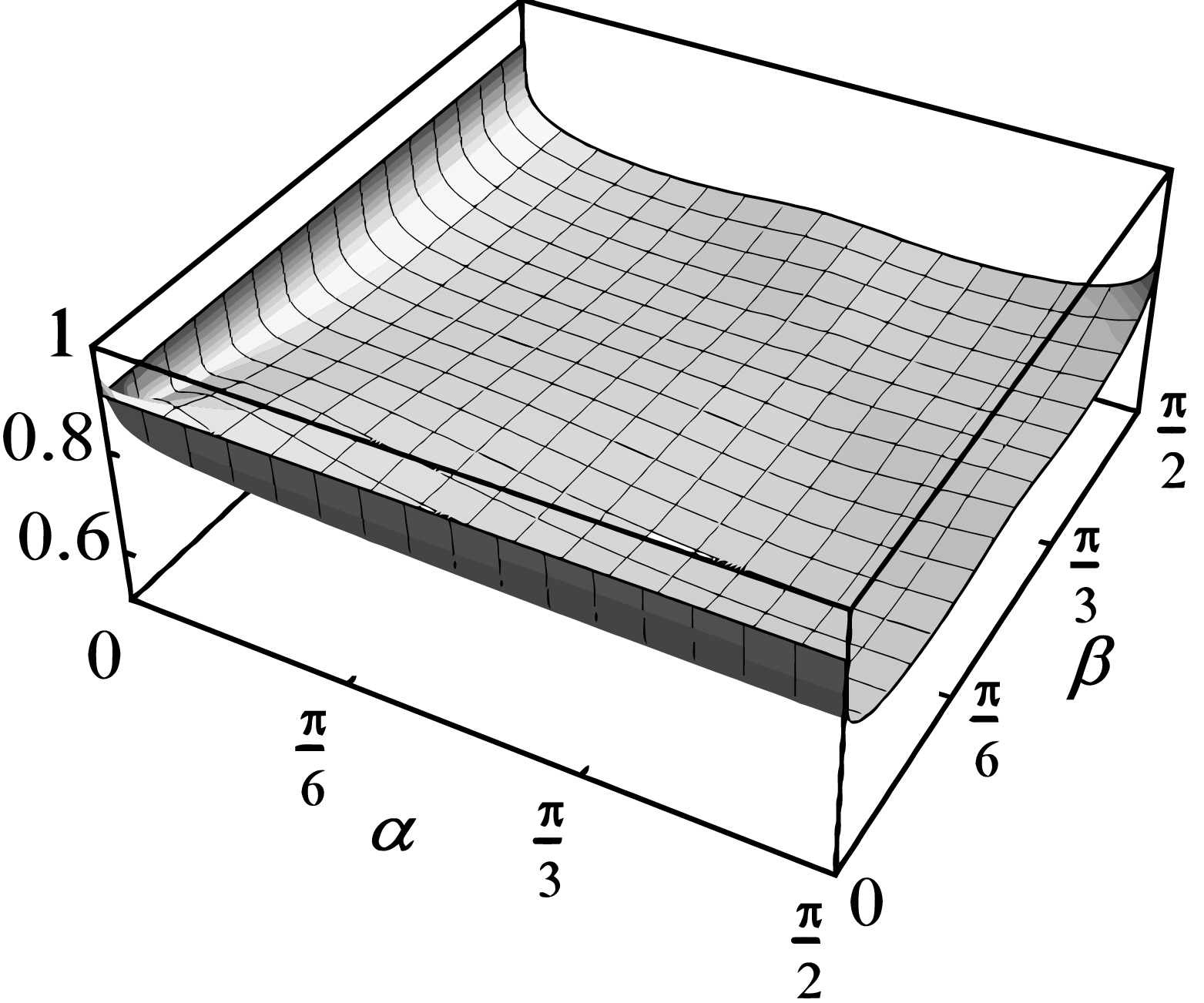}
\end{center}
\caption{Ratio of $\lambda_\textup{max}-\lambda_0$ and $\log(s^2/27\rho^2)$.}
\label{figureratio}
\end{figure}
It is obtained using Plot3D command in Mathematica \cite{m9}.
Note that it is enough to restrict the domain to $0<\alpha,\beta<\pi/2$, because every triangle has at least two acute angles.
The figure illustrates that the ratio is always between (say) $1/2$ and $1$, although we have not established such inequalities rigorously.
The moral of this remark could be that there are some wise choices of the initial approximation to $\lambda$
when solving \eqref{formulaeqforlambda} numerically.

Another interesting observation is related to formulas \eqref{formulauvw}, \eqref{formulacartforx}, and \eqref{formulacartfory}
for Cartesian coordinates of point $P$.
If we now ``free'' the variable $\lambda$ and treat it simply as a parameter that runs over interval $(0,\infty)$,
then the point $P$ traces some planar curve.
Each specific choice of $\lambda$ theoretically corresponds to a triangle center.
It is easy to find limiting positions of $P$ as $\lambda\to 0$ and $\lambda\to\infty$.
These are respectively the centroid $X(2)$ and the incenter $X(1)$.

\section{Related results}

An interesting problem is to investigate extreme points of more general convolution potentials, such as \eqref{formulariesz}
for parameter $p$ taking values other than $-1$, and some of that work has been done by O'Hara \cite{oh12}
following ``experimental speculations'' by Shibata \cite{s09}.
It is well-known that for $p=2$ we obtain the centroid $X(2)$ and the case $p=-2$ will be discussed below.
It seems that all other choices of $p$ lead to unnamed triangle centers and it is not clear which of them satisfy any reasonably nice relations.

Observe that the integral in \eqref{formulariesz} diverges for $p\leq -2$.
One can still define potential difference between two interior points, simply by cutting out small congruent disks around those points.
More precisely, the expression
$$ V_p(P) - V_p(P')
= \iint_{T\setminus\textup{D}_{\varepsilon}(P)} |PQ|^p d\lambda(Q)
- \iint_{T\setminus\textup{D}_{\varepsilon}(P')} |P'Q|^p d\lambda(Q) $$
is well-defined for interior points $P,P'$ and $\varepsilon>0$ small enough, and determines function $V_p$ up to an additive constant.
Our definition is a simpler alternative to the more common approach of subtracting the singular part from the limit as $\varepsilon\to 0$,
as is done in \cite{oh12}.

Let us only comment on the case $p=-2$, as it is also quite interesting and has already appeared in the literature.
Shibata \cite{s09} considered the problem of choosing the position of a street lamp in a triangular park,
in a way that it maximizes the total brightness of the park.
He further reformulates the problem as finding the maximum point of the potential $V_{-2}$ and names it the \emph{illuminating center} of $T$.
Geometrical characterization of such point $P$ inside $\triangle ABC$ that was given in \cite{s09} can be restated as
$$ \frac{\angle BPC}{\textup{area}(\triangle BPC)} = \frac{\angle CPA}{\textup{area}(\triangle CPA)} = \frac{\angle APB}{\textup{area}(\triangle APB)} . $$
Shibata's text does not contain a complete proof of this relation, so let us comment on how one can deduce it rather easily
along the lines of previously presented results.

One can still derive a formula analogous to \eqref{formulaequalszero}. Similarly as in sections
\ref{sectionproperties} and \ref{sectionrelations} we conclude that any stationary point $P$ for $V_{-2}$ in the interior of $T$ now has to satisfy
\begin{equation}\label{formulaequalszero2}
\int_{0}^{2\pi}\! R(\varphi)^{-1} e^{i\varphi} d\varphi = 0 .
\end{equation}
Here we use the same notation as in the proof of theorem \ref{thm1}.
One then calculates
\begin{align*}
& \int_{\varphi_A}^{\varphi_B} R(\varphi)^{-1} e^{i\varphi} d\varphi
= \int_{\alpha_1}^{\pi-\beta_2} \frac{\sin\psi}{d_c} e^{i(\psi+\varphi_A-\alpha_1)} d\psi \\
& = -\frac{ie^{i\varphi_B}}{4r_B} + \frac{ie^{i\varphi_A}}{4r_A} + \frac{e^{i\varphi_B}\cot\beta_2}{4r_B} + \frac{e^{i\varphi_A}\cot\alpha_1}{4r_A}
- \frac{\angle APB}{2i d_c} e^{i\theta_c} ,
\end{align*}
so that \eqref{formulaequalszero2} gives
\begin{align*}
\frac{e^{i\varphi_A}(\cot\alpha_1\!+\!\cot\alpha_2)}{4r_A} + \frac{e^{i\varphi_B}(\cot\beta_1\!+\!\cot\beta_2)}{4r_B}
+ \frac{e^{i\varphi_C}(\cot\gamma_1\!+\!\cot\gamma_2)}{4r_C} & \\
- \frac{\angle BPC}{2i d_a} e^{i\theta_a} - \frac{\angle CPA}{2i d_b} e^{i\theta_b} - \frac{\angle APB}{2i d_c} e^{i\theta_c} & = 0 .
\end{align*}
Straightforward computation shows that the sum of the first three terms is $0$ for just any point $P$, so the above equality becomes
$$ \frac{\angle BPC}{d_a} e^{i\theta_a} + \frac{\angle CPA}{d_b} e^{i\theta_b} + \frac{\angle APB}{d_c} e^{i\theta_c} = 0 , $$
i.e.
$$ \frac{\angle BPC}{\textup{area}(BPC)} \,\overrightarrow{CB} + \frac{\angle CPA}{\textup{area}(CPA)} \,\overrightarrow{AC}
+ \frac{\angle APB}{\textup{area}(APB)} \,\overrightarrow{BA} = \vec{0} . $$
It is easy to fill in the details.

\section{Acknowledgments}
We are grateful to Clark Kimberling for including the point of maximal electrostatic potential
as entry $X(5626)$ in his encyclopedia \cite{ck} shortly after the preprint of this paper became available.
We would also like to thank Dragutin Svrtan for a useful discussion.

\end{document}